\newtheorem{theorem}{Theorem}
\newtheorem{lemma}[theorem]{Lemma}
\newtheorem{corollary}[theorem]{Corollary}
\newtheorem{proposition}[theorem]{Proposition}
\newtheorem{example}[theorem]{Example}
\newtheorem{definition}[theorem]{Definition}
\newcommand{\ZZ}{\mathbb Z}
\newcommand{\PP}{\mathbb P}
\newcommand{\fS}{\mathfrak S}
\newcommand{\autL}{A}
\newcommand{\iso}{\operatorname{iso}}
\newcommand{\newick}{\textsf}
\newcommand{\arxiv}[1]{#1}
\newcommand{\notarxiv}[1]{}
\newcommand{\FIGspr}{\
\begin{figure}
  \arxiv{\includegraphics[width=4in]{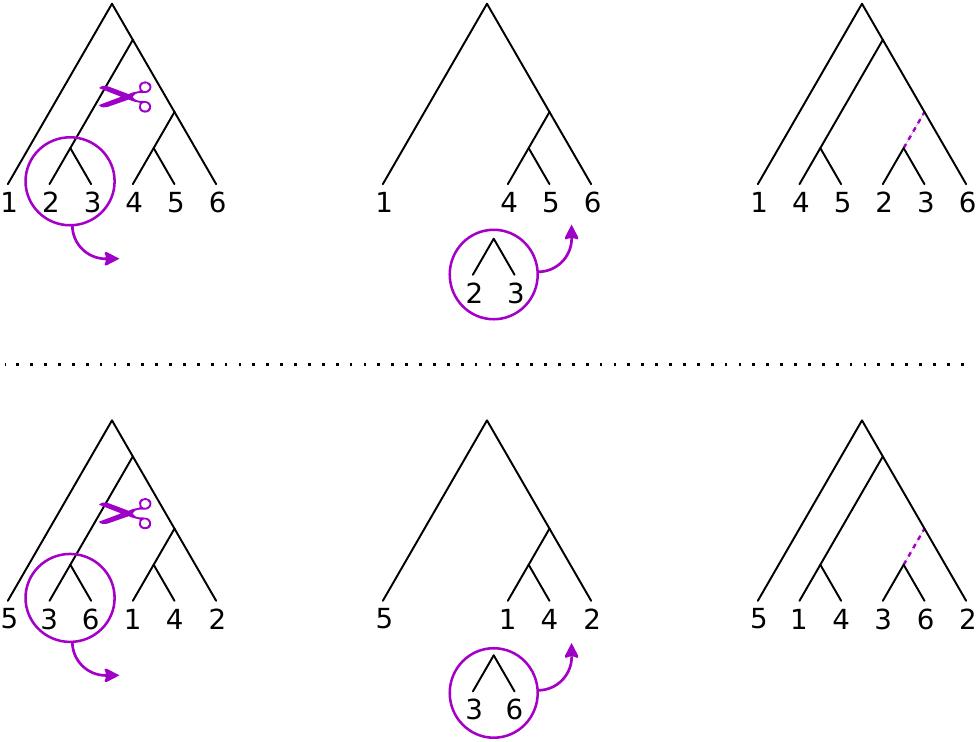}}
\caption{\
  Two equivalent subtree-prune-regraft moves applied to trees which are identical up to relabeling.
  The number of such moves required to transform one tree into another only depends on the \emph{relative} leaf labeling between the two trees.
}
\label{FIGspr}
\end{figure}
}
\newcommand{\FIGsprTanglegram}{\
\begin{figure}
  \arxiv{\includegraphics[width=2.5in]{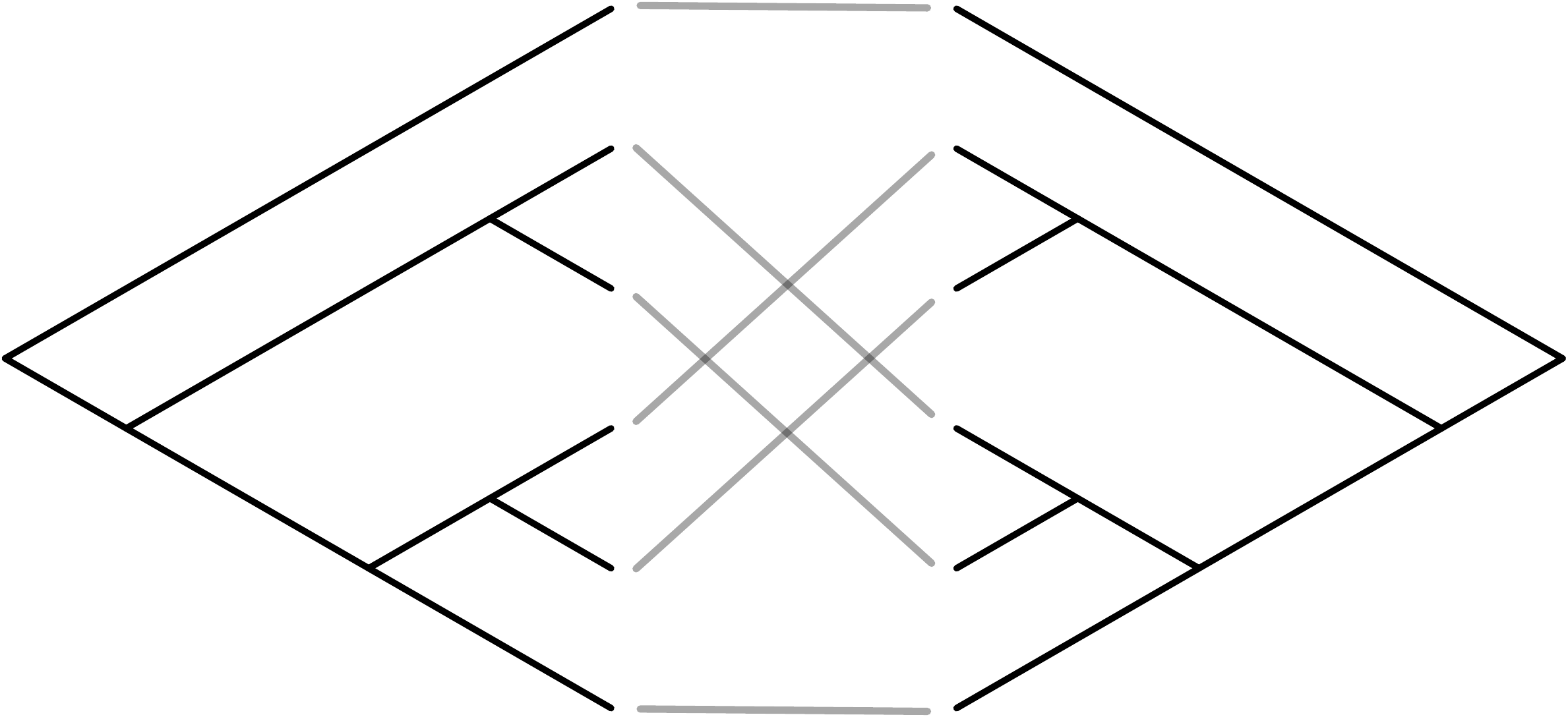}}
\caption{\
    The tanglegram corresponding to the pairs of trees in Figure~\ref{FIGspr}, with the bijection shown in gray.
    When considered as a graph, the black edges are called \emph{tree edges}, and the gray edges are called \emph{between-leaf edges}.
}
\label{FIGsprTanglegram}
\end{figure}
}
\newcommand{\FIGtanglegramFour}{\
\begin{figure}
  \arxiv{\includegraphics[width=2.5in]{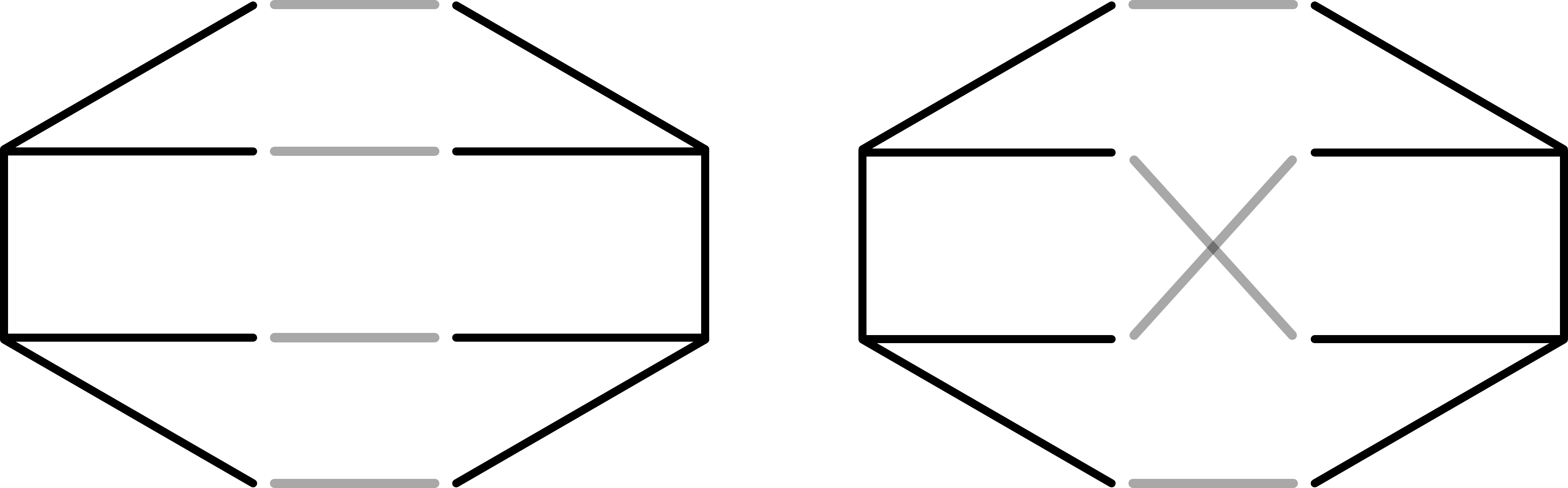}}
\caption{\
The two unrooted binary tanglegrams with four leaves.
}
\label{FIGtanglegramFour}
\end{figure}
}
\newcommand{\FIGnoFlip}{\
\begin{figure}
  \arxiv{\includegraphics[width=2in]{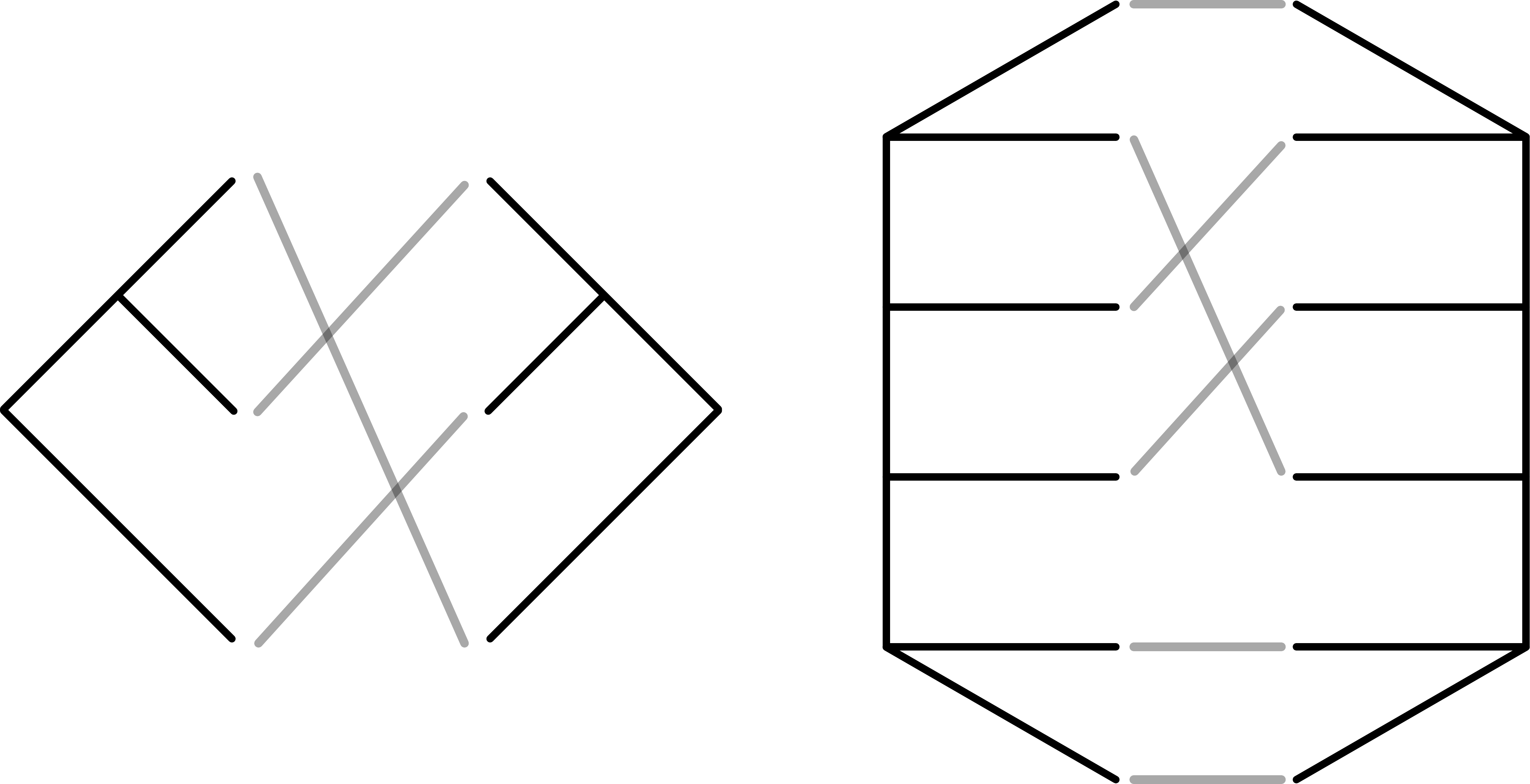}}
\caption{\
An ordered rooted and an unordered unrooted tanglegram formed by two copies of the same unrooted tree with no automorphism that switches the trees forming each tanglegram.
These examples show that the second condition of Proposition~\ref{prop:flip} is not always satisfied.
}

\label{FIGnoFlip}
\end{figure}
}
\newcommand{\FIGcounts}{\
\begin{figure}
  \arxiv{\includegraphics[width=.8\linewidth]{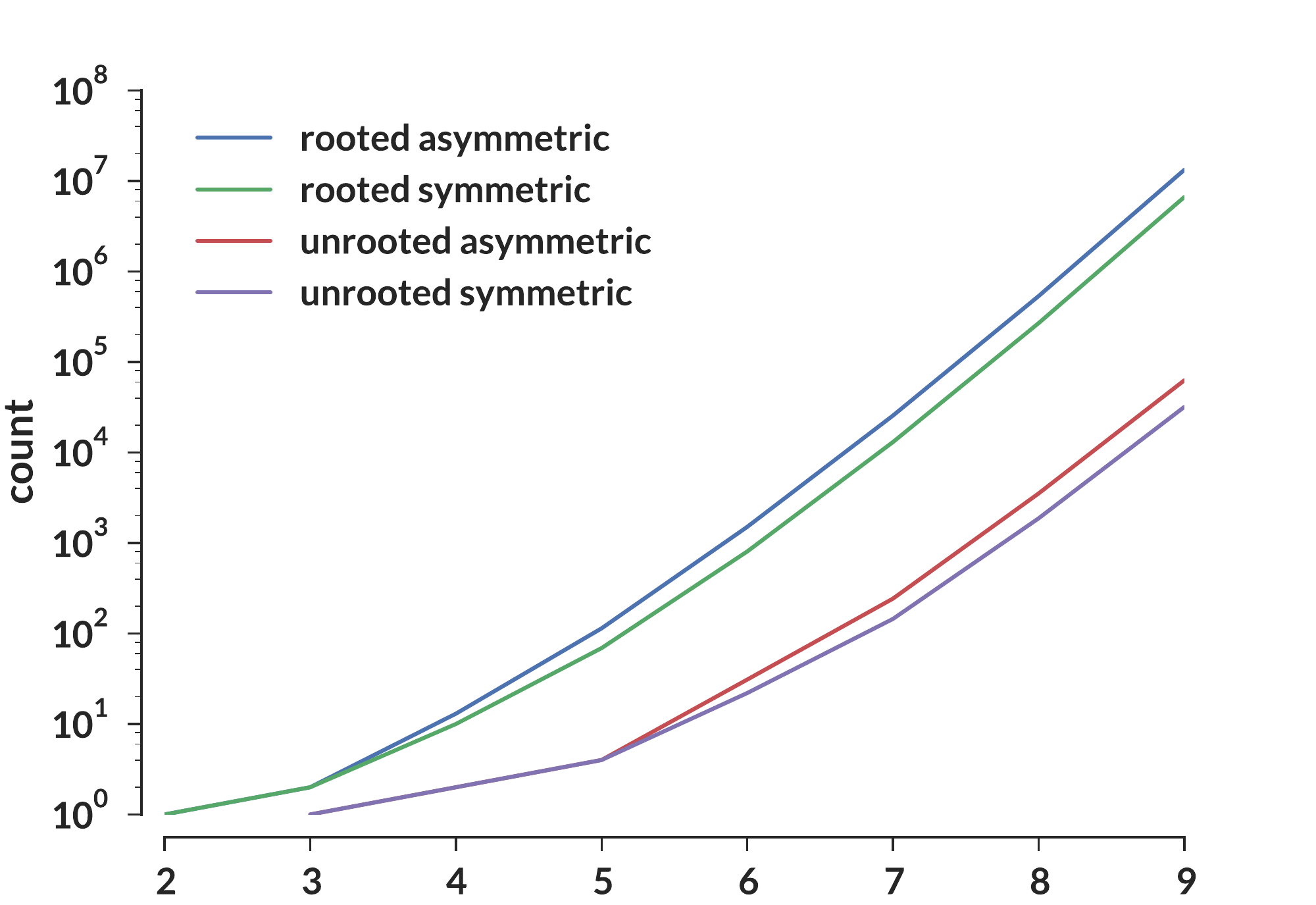}}
  \caption{Counts of various types of tanglegrams.}
\label{FIGcounts}
\end{figure}
}
\newcommand{\TABcounts}{\
\begin{table}
\centering
\small
\caption{
Enumeration of various types of binary tanglegrams.
These counts have been validated ``from below'' by checking for graph isomorphisms between exemplars for $n \leq 6$ in the rooted case, and $n \leq 7$ in the unrooted case.}
\arxiv{
\begin{tabular}{l|llll}
leaves & rooted ord. & rooted unord. & unrooted ord. & unrooted unord. \\
\hline
1      & 1             & 1            & 1               & 1              \\
2      & 1             & 1            & 1               & 1              \\
3      & 2             & 2            & 1               & 1              \\
4      & 13            & 10           & 2               & 2              \\
5      & 114           & 69           & 4               & 4              \\
6      & 1509          & 807          & 31              & 22             \\
7      & 25595         & 13048        & 243             & 145            \\
8      & 535753        & 269221       & 3532            & 1875           \\
9      & 13305590      & 6660455      & 62810           & 31929
\end{tabular}
}
\label{TABcounts}
\end{table}
}
\begin{document}
\title{Tanglegrams: a reduction tool for mathematical phylogenetics}

\author{Frederick A. Matsen IV}%
\address{Computational Biology Program,
Fred Hutchinson Cancer Research Center,
Seattle, WA 98109, USA}
\email{}%
\urladdr{http://matsen.fredhutch.org/}
\thanks{FAM partially supported by National Science Foundation grant DMS-1223057, SCB partially supported by National Science Foundation grant DMS-1101017, and MK supported by Research Program Z1-5434 and Research Project BI-US/14-15-026 of the Slovenian Research Agency.}

\author{Sara C.  Billey}%
\address{Department of Mathematics, University of Washington,
Seattle, WA 98195, USA}
\email{}%
\urladdr{http://www.math.washington.edu/~billey/}

\author[Kas]{Arnold Kas}%
\address{Computational Biology Program,
Fred Hutchinson Cancer Research Center,
Seattle, WA 98109, USA}

\author{Matja\v z Konvalinka}%
\address{Department of Mathematics,
University of Ljubljana,
Jadranska 21, Ljubljana, Slovenia}
\email{}%
\urladdr{http://www.fmf.uni-lj.si/~konvalinka/}
\thanks{}

\date{\today}

\begin{abstract}
Many discrete mathematics problems in phylogenetics are defined in terms of the relative labeling of pairs of leaf-labeled trees.
These relative labelings are naturally formalized as tanglegrams, which have previously been an object of study in coevolutionary analysis.
Although there has been considerable work on planar drawings of tanglegrams, they have not been fully explored as combinatorial objects until recently.
In this paper, we describe how many discrete mathematical questions on trees ``factor'' through a problem on tanglegrams, and how understanding that factoring can simplify analysis.
Depending on the problem, it may be useful to consider a unordered version of tanglegrams, and/or their unrooted counterparts.
For all of these definitions, we show how the isomorphism types of tanglegrams can be understood in terms of double cosets of the symmetric group, and we investigate their automorphisms.
Understanding tanglegrams better will isolate the distinct problems on leaf-labeled pairs of trees and reveal natural symmetries of spaces associated with such problems.
\end{abstract}

\maketitle

\section{Introduction}
Consider the problem of computing the \emph{subtree-prune-regraft} (SPR) distance between two leaf-labeled phylogenetic trees.
An SPR move cuts one edge of the tree and then reattaches the resulting rooted subtree at another edge (Figure~\ref{FIGspr}).
The SPR distance between two (phylogenetic, meaning leaf-labeled) trees $T_1$ and $T_2$ is the minimum number of SPR moves required to transform $T_1$ into $T_2$.
This distance is of fundamental importance in phylogenetics, and many papers have been written both applying \cite{Beiko2005-la,Whidden2014-yt} and investigating properties of \cite{Allen2001-yz,Bordewich2005-cx,Whidden2013-tf} this distance.

Say that we wanted to calculate the SPR distance between every pair of trees on a certain number of leaves.
Na\"ively this would require a large number of SPR calculations, namely the number of leaf-labeled phylogenetic trees choose two.
However, the distance between two such trees does not depend on the actual labels of $T_1$ and $T_2$, so one can permute the leaf labels without changing the distance.
Furthermore, a path made by intermediate trees between the two trees could also have its labels permuted in order to give a path between the trees with permuted leaf labels.
Thus, problems like SPR distance do not concern the actual leaf labels as such, but rather use the leaf labels as markers that can be used to map leaves of one phylogenetic tree on to another: the problem and its solutions are actually defined in terms of a \emph{relative} leaf labeling (Figure~\ref{FIGspr}).
\FIGspr

Analogous discrete mathematics problems and objects defined in terms of tuples of labeled combinatorial objects, but without direct reference to the labels themselves, are ubiquitous in computational biology.
Any distance between pairs of trees that is computed in terms of tree modifications, such as (rooted or unrooted) subtree-prune-regraft described above, \emph{nearest-neighbor-interchange} and \emph{tree bisection and reattachment} (see \cite{Bordewich2005-cx} for a review), satisfy this condition.
Such moves are used as the basis of both maximum-likelihood heuristic search and Bayesian Markov chain Monte Carlo (MCMC) tree reconstruction.
The corresponding graph, in which trees form vertices and a collection of moves form edges, has natural symmetries of pairs of points in these spaces which have the same relative labeling.
For example, hitting times of simple random walks on graphs formed by such moves for given start and end trees \cite{Aldous2000-vg,Diaconis2002-gy,Evans2006-xh} are defined in terms of relative labelings between the start and end trees.
The same is true for more complex random walks such as Markov chain Monte Carlo using a label-invariant likelihood, as would be used for sampling from a prior distribution on trees \cite{Alfaro2006-ru}.
Graph characteristics such as Ricci-Ollivier curvature \cite{Ollivier2009-bw} under simple random walks or MCMC with a label-invariant likelihood are expressed in terms of relative tree labelings \cite{Whidden2015-lg}.
Analogous considerations hold for the problem of species delimitation, which can naturally be phrased in terms of inference of a partition of relatively labeled objects: neither distances between partitions \cite{Gusfield2002-il} nor the graphs underlying MCMC over these partitions \cite{Yang2010-kc} actually refer to labels themselves.

The concept of a pair of rooted phylogenetic trees with a relative leaf labeling has been formalized as a \emph{tanglegram} \cite{Page1993-hr,Page2003-rr}.
A tanglegram is a pair of trees on the same set of leaves with a bijection between the leaves in the two trees \cite{Venkatachalam2010-zh} (Figure~\ref{FIGsprTanglegram}).
There has been extensive work on the problem of finding the layout of a given tanglegram in the plane that minimizes crossings, with the goal of most clearly visualizing co-evolutionary relationships between species \cite{Buchin2008-lc,Lozano2008-tp,Bansal2009-ni,Bocker2009-xl,Fernau2010-an,Venkatachalam2010-zh}.
\FIGsprTanglegram

However, we are not aware of any work considering tanglegrams as a convenient formalization of the notion of a relative leaf labeling in the context of studying pairs of labeled phylogenetic trees.
There has also been little work enumerating or finding other properties of tanglegrams until recently \cite{enumerateBinaryAsymmetric}.
In addition, more challenging and important problems in mathematical phylogenetics reduce to questions on relatively-labeled collections of more than two trees, and correspondingly one can extend the notion of tanglegram to more than two trees.
For example, ``supertree'' methods reconstruct a tree from collections of trees, each of which is typically considered to express information about the larger tree \cite{Bininda-Emonds2002-bt,Steel2008-pn,Whidden2014-yt}, which in fact is a problem on multi-tree tanglegrams.
The same is true for the minimal hybridization network \cite{Baroni2005-kr} and maximum agreement subtree \cite{Finden1985-bv,Farach1995-wm} problems.
Thus many problems in the discrete mathematics of phylogenetic trees ``factor'' through a problem concerning a generalized version of a tanglegram.

With this motivation for studying tanglegrams in more depth, here we formalize more general notions of tanglegram, describe their symmetries, observe that tanglegrams have a convenient algebraic formulation as double cosets of the symmetric group, and provide some enumeration results for four types of tanglegram.

\section{Tanglegrams}
An \emph{unrooted binary tree} $T$ is a finite graph for which there is a unique path between every pair of vertices, and such that every non-leaf vertex has degree three.
A \emph{rooted tree} is an unrooted tree with a distinguished node called the \emph{root}.
We will also make the assumption common in phylogenetics that the root of a rooted tree has degree two, and that there are no degree-two nodes other than the root (if there is a root).
The \emph{leaves} $L(T)$ of a tree $T$ are degree-one vertices of the tree.

\begin{definition}
Let $T$ and $S$ be trees with the same number of leaves.
An \emph{ordered tanglegram} $Y$ on $(T, S)$ is an ordered triple $(T, \phi, S)$, where $\phi$ is a bijection $L(T) \rightarrow L(S)$.
\end{definition}

The \emph{graph of the tanglegram} $Y$ is the graph formed from the union of $T$ and $S$ by adding an edge from each leaf $x$ in $T$ to the corresponding leaf $\phi(x)$ in $S$.
We will distinguish these \emph{between-leaf edges} from the \emph{tree edges} of $T$ and $S$ (Figure~\ref{FIGsprTanglegram}).

We have defined tanglegrams in terms of ordered triples $Y = (T, \phi, S)$ , so $Y' = (S, \phi^{-1}, T)$ is a different tanglegram.
This is a sensible definition when considering sequences of trees with an inherent directionality.
However, often there is not such a directionality, such as for subtree-prune-regraft moves, which are easily reversed.
This motivates the following concept:
\begin{definition}
A \emph{unordered tanglegram} is a pair $(\{T, S\}, \phi)$ where $\{T, S\}$ is an unordered set of two trees, and $\phi$ is a bijection between $L(T)$ and $L(S)$.
\end{definition}

\subsection{Automorphisms and tanglegram equivalence}
Let $V(X)$ denote the vertex set of a graph $X$.
An \emph{isomorphism} between unrooted trees $T$ and $S$ is a bijective map $h:V(T) \rightarrow V(S)$ in which $f$ maps edges of $T$ to edges of $S$.
For a rooted tree, we add the requirement that an isomorphism must map the root node of $T$ to the root node of $S$.
An \emph{automorphism} of a tree $T$ is an isomorphism of $T$ with itself.
It is clear that the degree of a node (i.e.\ the number of adjacent nodes) is preserved under isomorphisms.
In phylogenetics, it is common that the root of a tree is the only node of degree two.
In this case, there is no distinction between isomorphisms of rooted trees and isomorphisms of these trees as unrooted trees because degrees are preserved under isomorphism.

We start with an ``obvious'' lemma, the proof of which can be found in the Appendix.
First note that any isomorphism between trees $T$ and $S$ preserves the leaf sets $L(T)$ and $L(S)$, and therefore induces a bijection between $L(T)$ and $L(S)$.
\begin{lemma}
\pushQED{\qed}
An isomorphism between (rooted or unrooted) trees $T$ and $S$ is uniquely determined by the induced bijection between $L(T)$ and $L(S)$.
In particular, an automorphism of a tree $T$ is uniquely determined by the induced permutation of the leaf set $L(T)$.
\label{lemma:leafAction}
\popQED
\end{lemma}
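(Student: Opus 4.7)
My plan is to reduce the lemma to showing that any automorphism $h$ of $T$ fixing every leaf pointwise must equal the identity. Given two isomorphisms $f, g \colon T \to S$ inducing the same bijection $L(T) \to L(S)$, the composition $h = g^{-1} \circ f$ is an automorphism of $T$ fixing each leaf; once I know $h = \mathrm{id}_T$, I conclude $f = g$. So both parts of the lemma reduce to the ``in particular'' statement about automorphisms.

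For the automorphism statement, I would prove $h(v) = v$ for every $v \in V(T)$ by embedding $v$ on a path between two leaves. Leaves are fixed by hypothesis, so suppose $v$ is internal. Every internal vertex of $T$ has at least two distinct neighbors (three for non-root internal vertices; two for the root in the rooted case), so I can start a walk at $v$ in two opposite directions through distinct neighbors $v_1$ and $v_2$, and extend at both ends by iteratively passing to a neighbor other than the previous vertex. Because $T$ is a finite acyclic graph in which every non-leaf has at least two neighbors, the only place the extension can halt is at a vertex of degree one, i.e.\ a leaf of $T$. This yields distinct leaves $a, b \in L(T)$ such that $v$ lies on the unique $a$-$b$ path in $T$.

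Since $h$ is a graph automorphism with $h(a) = a$ and $h(b) = b$, it maps the unique $a$-$b$ path to the unique $h(a)$-$h(b)$ path, which is the same path; hence $h$ permutes this path setwise and preserves graph distance along it. The vertex at distance $d(a, v)$ from $a$ along a simple path is unique, so $h(v) = v$. This handles the unrooted case. For the rooted case, observe that a rooted isomorphism by definition sends the root to the root, and all other internal vertices have degree three; once again $h$ is a leaf-fixing automorphism of the underlying graph, and the path argument above applies verbatim.

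The step I expect to require the most care is the ``extend the walk to a leaf'' argument, particularly in the rooted case where the root has degree two. I would make this rigorous by an explicit finite induction: at each step the current endpoint is either a leaf of $T$ (and we stop) or an internal vertex of degree at least two with one unused neighbor available in the direction of extension (and we continue). Acyclicity of $T$ guarantees we never revisit a vertex, and finiteness guarantees termination. No other step is genuinely difficult; the rest is routine bookkeeping about distances and unique paths in trees.
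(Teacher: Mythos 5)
Your proposal is correct and follows essentially the same route as the paper's proof: the key steps in both are that every internal vertex lies on a path between two leaves (established by the same non-backtracking walk-to-a-leaf argument) and that paths between fixed endpoints in a tree are unique. Your preliminary reduction to the leaf-fixing automorphism $h = g^{-1}\circ f$ is a harmless repackaging of the paper's direct comparison of $f(\Pi)$ and $g(\Pi)$, not a different method.
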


Thus we will often consider an isomorphism as such a bijection $L(T) \rightarrow L(S)$.

\begin{definition}
\label{def:tangleEquiv}
Given two tanglegrams $Y = (T, \phi, S)$ and $Y' = (T, \phi', S)$ on the same pair of trees, an isomorphism of $Y$ and $Y'$ is defined by a pair of automorphisms $g: L(T) \rightarrow L(T)$, and $h: L(S) \rightarrow L(S)$ satisfying $h \circ \phi = \phi' \circ g$.
\end{definition}

The condition in the definition can be visualized in the commutative diagram
\[
\begin{CD}
L(T)    @>{\phi}>>   L(S)\\
@VV{g}V         @VV{h}V \\
L(T)   @>{\phi'}>>   L(S).
\end{CD}
\]

Note that if two tanglegrams $Y_1$ and $Y_2$ are isomorphic, then there is a 1-1 map from the graph of $Y_1$ to the graph of $Y_2$ which maps between-leaf edges to between-leaf edges.

\subsection{Symmetries of trees}
In order to describe the ensemble of tanglegrams it is necessary to review the symmetries of the trees in the tanglegram.
Although this material is classical, we were not able to find a simple presentation, and so provide one here.
We will assume familiarity with the basics of group theory (covered by dozens of textbooks, e.g.\ \cite{Dummit2004-ls}).
Automorphisms of a tree $T$ form a group under composition.
Using $\fS_n$ to denote the symmetric group on $n$ objects, leaf automorphisms of $T$ form a subgroup $\autL(T)$ of $\fS_{|L(T)|}$.

To enumerate symmetries of trees it is convenient to use the notion of a \emph{wreath product};
we will only define and use wreath product in the case when the acting group is $\fS_k$.
Use $G^k$ to denote the $k$-fold direct product $G \times \cdots \times G$.

Given a group $G$, the wreath product $G \wr \fS_k$ of $G$ by $\fS_k$ can be described as the direct product $G^k \times \fS_k$ with the following group operation.
First recall that the group operation on $G^k$ is defined by applying $G$'s group operation component-wise.
An element of $\fS_k$ acts on $G^k$ by permuting the components, such that the group action of $\sigma \in \fS_k$ on $g \in G^k$ is the element $\sigma(g) \in G^k$ with $i$th component $g_{\sigma(i)}$.
Given elements $g, g'$ in $G^k$ and $\sigma, \sigma' \in \fS_k$, the wreath group law is:
\[
(g, \sigma) \, (g', \sigma') := (g \, \sigma(g'), \sigma \sigma').
\]

For rooted trees, Jordan \cite{Jordan1869-ce} and P\'olya \cite{Polya1937-np} observed that the automorphism group of any rooted tree can be built by repeated direct products and wreath products of symmetric groups as follows.
In the simplest case, assume a rooted tree $T$ for which the root has two daughter subtrees $T_1$ and $T_2$.
If $T_1$ and $T_2$ are isomorphic (and thus have the same automorphism groups), the automorphism group of $T$ is the wreath product $\autL(T_1) \wr \fS_2$.
That is, its symmetry group is two copies of $\autL(T_1)$ along with the symmetry exchanging $T_1$ and $T_2$, equipped with the group operation that appropriately exchanges the subtrees before applying symmetries to the subtrees.
If $T_1$ and $T_2$ are not isomorphic, then $\autL(T)$ is simply the direct product $\autL(T_1) \times \autL(T_2)$.

Now let $T$ be a tree whose root has some number of daughters, each of which are roots of subtrees $T_1,\ldots,T_r$.
We can reorder and partition the subtrees into $N$ partitions:
\[
T_1,\ldots,T_{i_1}, \, T_{i_1 + 1},\ldots,T_{i_2}, \, \ldots, \, T_{i_{N-1} + 1}, \ldots,T_{i_N}
\]
such that the subtrees in each partition are isomorphic to one another and the subtrees in different partitions are not isomorphic.
This defines integers $i_1, \ldots, i_N$; take $i_0$ to be zero.
A more general version of the argument above establishes
\begin{theorem}[Jordan, 1869]
\pushQED{\qed}
$\autL(T)$ is the direct product $A_1 \times \cdots \times A_N$, where $A_j$ is the wreath product of $\autL(T_{i_j})$ with the symmetric group $\fS_{i_j - i_{j-1}}$.
\popQED
\label{theorem:treeAuto}
\end{theorem}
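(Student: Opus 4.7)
The plan is to work throughout with $\autL(T)$ viewed as a subgroup of $\fS_{|L(T)|}$, using Lemma \ref{lemma:leafAction} to freely identify tree automorphisms with their induced leaf permutations, and then to describe which leaf permutations can arise from automorphisms of $T$.

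First I would argue that any automorphism $\sigma$ of $T$ fixes the root (since rooted-tree isomorphisms are required to preserve the root) and hence permutes the daughter subtrees $T_1, \ldots, T_r$; moreover $\sigma$ restricts to a tree isomorphism $T_i \to T_{\sigma(i)}$, so $T_i \cong T_{\sigma(i)}$. Thus the only permutations of $\{T_1, \ldots, T_r\}$ induced by automorphisms of $T$ are those stabilizing the partition into isomorphism classes $\{T_1,\ldots,T_{i_1}\}, \ldots, \{T_{i_{N-1}+1},\ldots,T_{i_N}\}$.

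This immediately gives the internal direct product decomposition $\autL(T) = A_1 \times \cdots \times A_N$, where $A_j$ consists of those leaf permutations coming from automorphisms that act as the identity on all subtrees outside the $j$th block. To identify $A_j$ with $\autL(T_{i_j}) \wr \fS_{i_j - i_{j-1}}$, set $k = i_j - i_{j-1}$ and fix once and for all an isomorphism from each subtree in the $j$th block to the representative $T_{i_j}$. An element of $A_j$ is then specified by (a) a permutation $\sigma \in \fS_k$ describing how the subtrees in the block are shuffled, together with (b) for each $i = 1, \ldots, k$, an element $g_i \in \autL(T_{i_j})$ describing the automorphism induced on the $i$th subtree, transported via the fixed identifications. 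Conversely, any such data $(g, \sigma)$ defines a valid leaf permutation which by Lemma \ref{lemma:leafAction} extends uniquely to an automorphism of $T$.

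The final step is to verify that composition of these automorphisms corresponds to the wreath product law $(g, \sigma)(g', \sigma') = (g \cdot \sigma(g'), \sigma \sigma')$. This is a direct calculation: the permutations compose as $\sigma\sigma'$, while the internal factor $g'_i$ ends up acting on a subtree whose position has been shifted by $\sigma$, producing the twisted product $g \cdot \sigma(g')$. The main (modest) obstacle is bookkeeping---matching the direction of the action $\sigma(g)_i = g_{\sigma(i)}$ and the left/right order of composition carefully enough to reproduce exactly the wreath product law as defined in the excerpt. Once conventions are pinned down, the identification is immediate.
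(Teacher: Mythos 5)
Your proposal is correct and follows exactly the approach the paper itself indicates: the paper gives no detailed proof, only the informal two-subtree discussion and the remark that ``a more general version of the argument above establishes'' the theorem, and your argument is precisely that generalization (partition-preservation of the subtree permutation, direct product across isomorphism classes, wreath product within each class). The only cosmetic quibble is that the extension of the data $(g,\sigma)$ to an automorphism of $T$ is by construction rather than by Lemma~\ref{lemma:leafAction}, which is instead what guarantees the correspondence is injective.
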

This defines the automorphism group of a rooted tree recursively, where of course the automorphism group of a single leaf is trivial.

\begin{example}
Let $T_n$ denote the perfectly balanced binary tree on $2^n$ leaves and let $G_n = \autL(T_n)$.
$G_2 = \fS_2$ and for each n, $G_n = G_{n - 1} \wr \fS_2$.
Moreover, $|G_n| = 2 |G_{n - 1}|^2$.
\end{example}

\begin{example}
The symmetry group of the Newick-format \cite{wiki:newick} tree \newick{(1,((2,3),((4,5),6)));} (shown as the upper-left tree of Figure~\ref{FIGspr}) is the direct product of the symmetry groups of $(2,3)$ and $((4,5),6)$.
Each of these symmetry groups are $\fS_2$.
\end{example}

The automorphism group of an unrooted tree will become clear after we describe a classical and mathematically natural way to root an unrooted tree: at the \emph{centroid}.
Let $T$ be a tree, and let $x$ be a node of $T$.
If we remove $x$ as well as the edges attached to $x$ from $T$, we obtain a number of disjoint connected and rooted subtrees, $X_1,\ldots, X_k$.
\begin{definition}
The weight of $x$, $w(x)$, is defined as the maximum number of nodes of the subtrees $X_1, \ldots, X_k$.
\end{definition}
\begin{definition}
The node $x$ is said to be a centroid of $T$ if $w(x)$ is minimal over all nodes of $T$.
\end{definition}

It is clear that any automorphism of $T$ maps a centroid to a centroid, a fact which we will use to find a root fixed under leaf automorphism.
Centroids are unique or nearly so, as shown by the following theorem, the proof of which can be found as a guided exercise in \cite[\S 2.3.4.4]{Knuth1973}.
\begin{theorem}[Jordan, 1869]
\label{thm:jordan}
\pushQED{\qed}
Every tree has either:
\begin{enumerate}[1.]
\item a unique centroid or
\item two adjacent centroids.
\end{enumerate}
In case 2, every automorphism either preserves the centroids or exchanges them.
\popQED
\end{theorem}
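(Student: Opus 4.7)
The plan is to analyze how the weight function $w$ changes along an edge, then combine that with a careful node-count along the path between two would-be centroids. For any edge $\{x,y\}$, removing it splits $T$ into two subtrees; write $s(x,y)$ for the size of the one containing $y$, so $s(x,y) + s(y,x) = n := |V(T)|$. Removing the single node $x$ produces one component of size $s(x,y)$ per neighbor $y$, so $w(x) = \max_{y \sim x} s(x,y)$. The first milestone is the inequality $w(x) \leq n/2$ for every centroid $x$, which I would prove by contrapositive: if $s(x,y) > n/2$, then $s(y,x) = n - s(x,y) < s(x,y)$, and for any other neighbor $z$ of $y$ the component of $T - \{y\}$ containing $z$ is strictly contained in the component of $T - \{x\}$ containing $y$, so $s(y,z) \leq s(x,y) - 1$. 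Hence $w(y) < w(x)$ and $x$ could not have been a centroid.

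The main obstacle is ruling out two centroids at distance $\geq 2$. Suppose $x, x'$ are both centroids with connecting path $x = v_0, v_1, \ldots, v_k = x'$ of length $k \geq 2$. I would let $B_i$ denote the union of subtrees hanging off $v_i$ that lie outside the path ($1 \leq i \leq k-1$), and let $A$, $A'$ be the analogous ``outside-path'' subtrees at $v_0$ and $v_k$. A direct enumeration gives $n = (k+1) + |A| + |A'| + \sum_i |B_i|$ together with
\[
s(x, v_1) + s(x', v_{k-1}) = 2k + 2\sum_{i=1}^{k-1}|B_i| + |A| + |A'| = n + (k-1) + \sum_{i=1}^{k-1}|B_i|.
\]
For $k \geq 2$ the right-hand side exceeds $n$, while the previous paragraph bounds each term on the left by $n/2$, a contradiction. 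Hence any two distinct centroids must be adjacent, and since a tree contains no triangle there are at most two of them, yielding the dichotomy.

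Finally, for the automorphism claim, any isomorphism $\sigma : T \to T$ carries the components of $T - \{x\}$ bijectively onto those of $T - \{\sigma(x)\}$, so $w$ is $\sigma$-invariant and the centroid set is preserved setwise. In case~2 this two-element set is permuted by $\sigma$, so $\sigma$ either fixes both centroids or exchanges them.
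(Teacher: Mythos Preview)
Your proof is correct, and the automorphism clause is handled cleanly. The route, however, differs from the one the paper records (following Knuth). There, one takes two centroids $x$ and $y$ directly, with $y$ lying in the subtree $X_1$ of $T\setminus\{x\}$ and $x$ lying in the subtree $Y_1$ of $T\setminus\{y\}$; the other subtrees $Y_2,\ldots,Y_l$ around $y$ are properly contained in $X_1$, so their sizes are strictly below $|X_1|$. Combined with the equality $w(x)=w(y)$, this forces $|X_1|=|Y_1|$, and a size estimate $|Y_1|\geq |X_1|+\operatorname{dist}(x,y)-1$ then pins down $\operatorname{dist}(x,y)=1$. Your argument instead isolates the lemma $w(\text{centroid})\leq n/2$ first and then runs a global node count along the path, obtaining $s(x,v_1)+s(x',v_{k-1})=n+(k-1)+\sum_i|B_i|$, which overshoots $n$ once $k\geq 2$. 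The paper's version is a bit more self-contained (no auxiliary $n/2$ lemma, just containment and weight equality), while yours yields the $n/2$ bound as a reusable byproduct and makes the ``at most two'' conclusion immediate via the no-triangle observation. Both are standard; yours is arguably the more transparent of the two.
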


Let $T$ be an unrooted tree, and let $T_r$ be the rooted tree formed by rooting $T$ at either the unique centroid, or by a new node in the edge joining a pair of centroids.
\begin{corollary}
\pushQED{\qed}
The automorphism group of an unrooted tree $T$ is identical to the automorphism group of the associated rooted tree $T_r$.
\popQED
\end{corollary}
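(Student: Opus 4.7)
The plan is to identify automorphisms with their induced leaf permutations via Lemma~\ref{lemma:leafAction}, and then show the inclusions $\autL(T_r) \subseteq \autL(T)$ and $\autL(T) \subseteq \autL(T_r)$ separately.

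The first inclusion is essentially automatic: an automorphism of $T_r$ is by definition an automorphism of the underlying unrooted tree that fixes the distinguished root. In case~1 (unique centroid) the underlying unrooted tree of $T_r$ equals $T$, so this is immediate. In case~2, the underlying unrooted graph of $T_r$ is obtained from $T$ by subdividing the centroid edge with the new root node $r$; any automorphism of $T_r$ fixes $r$, so it restricts (by merging the two edges at $r$ back into the original centroid edge) to an automorphism of $T$ inducing the same leaf permutation.

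The reverse inclusion is where Theorem~\ref{thm:jordan} does the real work. Given $\sigma \in \autL(T)$, use Lemma~\ref{lemma:leafAction} to realize $\sigma$ as a unique automorphism $\tilde\sigma$ of the unrooted tree $T$. Since automorphisms preserve vertex degrees and subtree sizes, $\tilde\sigma$ sends centroids to centroids. In case~1 the centroid $c$ is unique, so $\tilde\sigma(c) = c$ and hence $\tilde\sigma$ is already an automorphism of $T_r$, giving $\sigma \in \autL(T_r)$. In case~2 we use the second assertion of Theorem~\ref{thm:jordan}: $\tilde\sigma$ either fixes both centroids $c_1,c_2$ or swaps them. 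In either case $\tilde\sigma$ preserves the edge $\{c_1,c_2\}$ setwise, so we can extend $\tilde\sigma$ to a bijection $\hat\sigma$ on $V(T_r) = V(T) \cup \{r\}$ by setting $\hat\sigma(r) = r$. One then verifies $\hat\sigma$ sends edges of $T_r$ to edges of $T_r$: the edges inside $T \setminus \{c_1c_2\}$ are permuted as under $\tilde\sigma$, and the two new edges $rc_1, rc_2$ are either each fixed or swapped depending on whether $\tilde\sigma$ fixes or swaps the centroids. Thus $\hat\sigma \in \aut(T_r)$ and induces the same leaf permutation $\sigma$.

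Since both inclusions hold and leaf permutations determine automorphisms uniquely, $\autL(T) = \autL(T_r)$, and the corresponding groups of tree automorphisms coincide. The only delicate point is the extension step in case~2, where one must carefully track the subdivision: this is where Theorem~\ref{thm:jordan}'s guarantee that every automorphism respects the centroid pair (rather than sending $c_1$ to some other vertex) is essential.
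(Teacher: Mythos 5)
Your argument is correct and is precisely the reasoning the paper intends: the corollary is stated without proof as an immediate consequence of Theorem~\ref{thm:jordan}, since every automorphism fixes the unique centroid or preserves the centroid pair (hence the subdividing root node), so rooting there changes nothing. Your write-up simply makes explicit the two inclusions and the edge-subdivision bookkeeping in case~2 that the paper leaves to the reader.
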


\begin{example}
The symmetry group of the six-leaf unrooted tree with three two-leaf subtrees (Newick format \newick{((1,2),(3,4),(5,6));}) is $\fS_2 \wr \fS_3$.
\end{example}

\subsection{Double cosets and enumeration of tanglegrams}
We are now ready to algebraically describe the set of tanglegrams on a pair of $n$-leaf trees.
Assume $n$-leaf trees $T$ and $S$, which are both rooted or both unrooted.
Arbitrarily mark the elements of the leaf sets $L(T)$ and $L(S)$ with the same set of $n$ symbols, such that we can identify both $\autL(T)$ and $\autL(S)$ as subgroups of $\fS_n$.
Using this same marking, we can also think of the bijections from $L(T)$ to $L(S)$ as being elements of $\fS_n$, thus these elements of $\fS_n$ give tanglegrams on $T$ and $S$.
Recall Definition~\ref{def:tangleEquiv}, stating that the set of bijections $\phi'$ giving the same tanglegram as a given $\phi$ are those for which there exist automorphisms $g \in \autL(T)$ and $h \in \autL(S)$ such that $h \circ \phi = \phi' \circ g$.
This criterion is equivalent to $\phi' = h \phi g^{-1}$ as group elements in $\fS_n$.
The set of elements satisfying such a criterion is called a double coset \cite{Dummit2004-ls}.

\begin{definition}
Given a subgroup $J$ of a group $G$ and $g \in G$, the \emph{right coset} $Jg$ (resp. \emph{left coset} $gJ$) $G$ is the set of elements of the form $\{jg \mid j \in J\}$ (resp. $\{gj \mid j \in J\}$).
The number of right cosets of $J$ in $G$ is equal to the number of left cosets. This number is defined as the index of $J$ in $G$ and is denoted $[G:J]$.
Given two subgroups $J$ and $K$ of $G$, the \emph{double coset} $JgK$ for some $g \in G$ is the set of elements $\{jgk \mid j \in J, k \in K\}$.
\end{definition}
Any two right (left) cosets of $J$ in $G$ are either identical or disjoint and the number of elements in any coset is the same, i.e. $|J|$.
In contrast to single cosets (left or right), the number of elements in a double coset may vary.
We state these observations, and the equivalent observations in the unordered case, as a proposition.
\begin{proposition}
\pushQED{\qed}
\label{prop:cosetTanglegramEquivalence}
Given two trees $T$ and $S$ with $n$ leaves,
\begin{itemize}
\item
the set of tanglegrams isomorphic to a tanglegram $(T, w, S)$ is in 1-1 correspondence with the double coset $\autL(S) w \autL(T)$ of $\fS_n$.
\item
the set of unordered tanglegrams isomorphic to $(\{T, S\}, w)$ is in 1-1 correspondence with equivalence classes of double cosets $\autL(S) w \autL(T)$ where pairs of cosets $HwK$ and $Kw^{-1}H$ are deemed equivalent.
\end{itemize}
\popQED
\end{proposition}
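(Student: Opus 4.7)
The plan is to transcribe the commutative-diagram condition from Definition~\ref{def:tangleEquiv} directly into a group equation in $\fS_n$, then account for the extra equivalence coming from swapping the two trees.

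For the first bullet, fix a common marking of $L(T)$ and $L(S)$ by $[n]$ (as described just before the statement), so that any bijection $\phi: L(T) \to L(S)$ becomes an element $w \in \fS_n$ and $\autL(T), \autL(S)$ become subgroups of $\fS_n$. The condition $h \circ \phi = \phi' \circ g$ of Definition~\ref{def:tangleEquiv} then reads $hw = w'g$ in $\fS_n$, equivalently $w' = hwg^{-1}$. Since $g \mapsto g^{-1}$ is a bijection $\autL(T) \to \autL(T)$, as $(g,h)$ runs over $\autL(T) \times \autL(S)$ the element $hwg^{-1}$ runs over exactly the double coset $\autL(S) w \autL(T)$, and conversely every element of that double coset arises. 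This is the claimed 1--1 correspondence.

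For the second bullet, I would first make the notion of isomorphism of unordered tanglegrams explicit: each unordered tanglegram $(\{T,S\},\phi)$ has two ordered representatives, namely $(T,\phi,S)$ and $(S,\phi^{-1},T)$, and two unordered tanglegrams are isomorphic iff some choice of ordered representatives for each is isomorphic in the sense of Definition~\ref{def:tangleEquiv}. Applying the first bullet to ordered tanglegrams on $(T,S)$ yields the double cosets $\autL(S) w \autL(T)$, while applying the same bullet to ordered tanglegrams on $(S,T)$ yields cosets $\autL(T) v \autL(S)$. The passage from one ordered representative to the other is the inversion $w \leftrightarrow w^{-1}$, which on the coset level sends $\autL(S) w \autL(T)$ onto $\autL(T) w^{-1} \autL(S)$. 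Quotienting by this identification gives precisely the equivalence relation stated.

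The main subtlety is exactly the one forcing us to extend the paper's definitions: isomorphism has so far only been defined for ordered tanglegrams, so the unordered statement requires recording that the swap $\phi \leftrightarrow \phi^{-1}$ exchanges an $\autL(S)\backslash\fS_n/\autL(T)$-coset with an $\autL(T)\backslash\fS_n/\autL(S)$-coset rather than producing something more exotic. Both verifications are short once the setup is in place, and I expect no deeper obstacle.
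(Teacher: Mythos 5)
Your proposal is correct and follows essentially the same route as the paper, which proves the first bullet by exactly this translation of the condition $h \circ \phi = \phi' \circ g$ from Definition~\ref{def:tangleEquiv} into $\phi' = h\phi g^{-1}$ in $\fS_n$ and leaves the unordered bullet as an immediate analogue. Your explicit handling of the two ordered representatives $(T,\phi,S)$ and $(S,\phi^{-1},T)$ and of how inversion exchanges $\autL(S) w \autL(T)$ with $\autL(T) w^{-1} \autL(S)$ is a useful elaboration of a step the paper treats as evident, not a different argument.
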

Note that the actual 1-1 correspondence depends on the marking of $T$ and $S$.

Here are some useful facts concerning cosets \cite{Dummit2004-ls,Lang-Algebra}:
\begin{itemize}
\item any two cosets are either disjoint or identical
\item every double coset is a disjoint union of right cosets and a disjoint union of left cosets
\item the number of right cosets of $H$ in $HgK$ is the index $[K:K \cap g^{-1}Hg]$,
and the number of left cosets of $K$ in $HgK$ is the index $[H:H \cap gKg^{-1}]$.
\end{itemize}

Combining these facts with the proposition above, we get:
\begin{proposition}
\pushQED{\qed}
The number of bijections from $L(T)$ to $L(S)$ giving an ordered tanglegram isomorphic to $Y = (T, w, S)$ is equal to
$| \autL(S) | [\autL(T) : \autL(T) \cap w^{-1}\autL(S)w]$, or equivalently
$| \autL(T) | [\autL(S) : \autL(S) \cap w \autL(T) w^{-1}]$.
\popQED
\end{proposition}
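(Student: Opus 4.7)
The plan is to combine Proposition~\ref{prop:cosetTanglegramEquivalence} with the two bulleted facts about double cosets listed just before the statement. By Proposition~\ref{prop:cosetTanglegramEquivalence}, the set of bijections $\phi: L(T) \to L(S)$ yielding a tanglegram isomorphic to $Y = (T, w, S)$ is in bijection with the double coset $\autL(S)\, w\, \autL(T) \subseteq \fS_n$, so the claim reduces to computing the cardinality of this double coset in two different ways.

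First I would set $H = \autL(S)$, $K = \autL(T)$, and $g = w$, so the object to count is $|HgK|$. Using the fact that $HgK$ is a disjoint union of right $H$-cosets, each of which has cardinality $|H|$, I get $|HgK| = |H| \cdot (\text{number of right $H$-cosets in $HgK$})$. The third bulleted fact identifies the number of such right cosets as the index $[K : K \cap g^{-1}Hg]$. Substituting back yields
\[
|\autL(S)\, w\, \autL(T)| = |\autL(S)| \cdot [\autL(T) : \autL(T) \cap w^{-1} \autL(S) w],
\]
which is the first expression in the proposition.

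For the second expression I would argue symmetrically, writing $HgK$ as a disjoint union of left $K$-cosets, each of cardinality $|K|$. The number of such left cosets is $[H : H \cap gKg^{-1}]$, giving
\[
|\autL(S)\, w\, \autL(T)| = |\autL(T)| \cdot [\autL(S) : \autL(S) \cap w\, \autL(T)\, w^{-1}],
\]
and the equivalence of the two formulas follows from the fact that both count the same double coset.

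Because the two bulleted facts are quoted from standard references, there is no real obstacle to the argument; the only care needed is bookkeeping about which subgroup plays the role of $H$ and which plays $K$, and correctly tracking the conjugation direction ($w^{-1}Hw$ versus $wKw^{-1}$). One could double-check consistency by verifying that $|\autL(S)| \cdot [\autL(T) : \autL(T) \cap w^{-1}\autL(S)w] = |\autL(T)| \cdot [\autL(S) : \autL(S) \cap w\autL(T)w^{-1}]$ directly, which follows after expanding both indices as $|H||K|/|H \cap wKw^{-1}|$ (using that $|K \cap w^{-1}Hw| = |wKw^{-1} \cap H|$ via conjugation by $w$).
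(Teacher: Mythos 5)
Your proposal is correct and follows exactly the route the paper intends: the paper gives no separate proof, simply stating that the result follows by ``combining these facts with the proposition above,'' i.e.\ identifying the isomorphism class of $Y$ with the double coset $\autL(S)\,w\,\autL(T)$ and counting its elements via the quoted coset-decomposition facts. Your bookkeeping of which subgroup plays $H$ versus $K$ and the direction of conjugation is consistent with the stated formulas, so nothing further is needed.
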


\FIGtanglegramFour

\begin{example}
Let $T$ and $S$ be the unique binary unrooted tree with 4 leaves.
There are two distinct tanglegrams on $(T, S)$ in both the ordered and unordered cases (Figure~\ref{FIGtanglegramFour}).
The automorphism group of either tree, $\autL(T)$, is the wreath product of $\fS_2$ by $\fS_2$, thus of order 8 (set theoretically $\ZZ_2 \times \ZZ_2 \times \ZZ_2$).
Marking the leaves with the integers 1 through 4 such that $(1,2)$ and $(3,4)$ are both sister pairs, $G = \autL(T)$ is generated by $\{ (12), (34), (13)(24) \} \subset \fS_4$.

The symmetric group $\fS_4$ contains $4! = 24$ elements.
Every double coset is a disjoint union of single cosets, and $G$ contains 8 elements, therefore the number of elements in a double coset is a multiple of 8.
Moreover, since the double cosets partition $\fS_4$, we either have 3 double cosets (each of 8 elements), or 2 double cosets (one of 8 elements and one of 16 elements), or one coset (of 24 elements).
Taking $w = (23)$, we calculate:
\[
G \cap w G w^{-1} = \{(), (12)(34), (13)(24), (14)(23)\}.
\]
Using the properties of double cosets, we find that the number of single cosets in the double coset $G w G$ is the index
$[G:G \cap w^{-1} G w] = 2$.
Thus this double coset has 16 elements, and so there must be two double cosets, corresponding to the two tanglegrams.
\end{example}

\subsection{Symmetries of tanglegrams}
\begin{definition}
An automorphism of an ordered tanglegram $Y$ is an automorphism of the graph of $Y$ which maps each tree to itself.
An automorphism of an unordered tanglegram $Y$ is an automorphism of the graph of $Y$ which preserves the between-leaf edges, so an automorphism of an unordered tanglegram either maps each tree to itself or switches the two trees.
If $Y$ is a rooted tanglegram, then an automorphism of $Y$ is required to preserve the roots of the two trees.
\end{definition}
If the automorphism $f:Y \rightarrow Y$ exchanges the two trees, $f$ is described by a pair of isomorphisms: $g_1:T \rightarrow S$ and $g_2:S \rightarrow T$.
For any leaf $x$ of $T$, the image of a bijective pair $(x, \phi(x))$ must map to another bijective pair $(g_2(\phi(x)), g_1(x))$.
This implies that $g_1(x) = \phi(g_2(\phi(x)))$, and thus in general that $g_1 = \phi \circ g_2 \circ \phi$.
If we put the same set of distinguishing marks on the leaves of the trees $T$ and $S$, we may consider the bijection $\phi$ to be an element of the symmetric group $\fS_n$.
With these conventions, we have shown that there exist $g_1 \in \autL(T)$ and $g_2 \in \autL(T)$ such that $g_1 = \phi \, g_2 \, \phi$ as group elements when there is an automorphism that switches the two trees.
The converse follows from reversing this argument.
In summary:
\begin{proposition}
\label{prop:flip}
\pushQED{\qed}
If $Y$ is an unordered tanglegram, then there exists an automorphism of $Y$ that switches the two trees if and only if:
\begin{itemize}
\item
the trees $T$ and $S$ are isomorphic, and
\item
$\phi \autL(T) \phi \cap \autL(T) \neq \emptyset$.
\end{itemize}
\popQED
\end{proposition}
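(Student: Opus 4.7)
The plan is to make precise the calculation already sketched in the paragraph immediately preceding the proposition, translating the existence of a tree-swapping automorphism into a permutation identity in $\fS_n$ and then reading that identity as an intersection condition on cosets.

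For the forward direction, I would begin by unpacking the structure of an automorphism $f$ of the graph of $Y$ that swaps the two trees. Its restriction to the tree edges of $T$ is a tree isomorphism $g_1\colon T \to S$ and its restriction to the tree edges of $S$ is a tree isomorphism $g_2\colon S \to T$; the mere existence of either one forces $T$ and $S$ to be isomorphic, establishing the first bullet. By Lemma~\ref{lemma:leafAction}, $g_1$ and $g_2$ are determined by their induced leaf bijections. Because $f$ must carry between-leaf edges to between-leaf edges, for every $x \in L(T)$ the edge $\{x,\phi(x)\}$ is sent to $\{g_2(\phi(x)),\, g_1(x)\}$, and this must again be a between-leaf edge, forcing $g_1(x) = \phi(g_2(\phi(x)))$ for all $x$. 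Now use an isomorphism $T \to S$ (guaranteed by the first bullet) to put a common marking of $L(T)$ and $L(S)$ by $[n]$; under this marking both $\autL(T)$ and $\autL(S)$ are identified with the same subgroup of $\fS_n$, and $g_1, g_2$ both lie in this subgroup. The relation $g_1 = \phi g_2 \phi$, read in $\fS_n$, then exhibits $g_1$ as an element of $\autL(T) \cap \phi\autL(T)\phi$, which is therefore nonempty.

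For the converse, I start from both hypotheses, fix the same common marking so that $\autL(T)=\autL(S)$ as a subgroup of $\fS_n$, and pick any $g_1 \in \autL(T) \cap \phi\autL(T)\phi$. Writing $g_1 = \phi g_2 \phi$ with $g_2 \in \autL(T)$, I apply Lemma~\ref{lemma:leafAction} in reverse to lift the leaf permutations $g_1$ and $g_2$ uniquely to tree isomorphisms $T \to S$ and $S \to T$, and I assemble them into a single map $f$ on the graph of $Y$ that swaps the two trees. By construction $f$ sends tree edges to tree edges, and the identity $g_1 = \phi g_2 \phi$ is exactly what is needed to check that $f$ sends each between-leaf edge $\{x,\phi(x)\}$ to the between-leaf edge $\{g_2(\phi(x)), g_1(x)\} = \{g_2(\phi(x)), \phi(g_2(\phi(x)))\}$, so $f$ is an automorphism of the graph of $Y$. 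In the rooted case the roots are automatically preserved, since the isomorphisms $g_1, g_2$ respect roots by the convention in force.

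The main obstacle I anticipate is bookkeeping around the identification: the two copies of $\phi$ in the expression $\phi\autL(T)\phi$ genuinely represent the map $L(T) \to L(S)$, and one must pass through a common marking so that both $\phi$'s and the automorphism groups live in the same $\fS_n$ before composition is even meaningful. I would take a moment to verify that, once the first bullet holds, the nonemptiness of $\phi\autL(T)\phi \cap \autL(T)$ is independent of which isomorphism $T \to S$ is used to set up the common marking, so that the statement of the proposition is well-posed.
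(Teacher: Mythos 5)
Your proposal is correct and follows essentially the same route as the paper, whose proof is the paragraph immediately preceding the proposition: decompose a tree-swapping automorphism into isomorphisms $g_1\colon T\to S$ and $g_2\colon S\to T$, use the between-leaf edge condition to derive $g_1 = \phi\, g_2\, \phi$, read this in $\fS_n$ under a common marking, and reverse the argument for the converse. Your additional attention to the well-posedness of the marking (and that the intersection condition does not depend on which isomorphism $T\to S$ is used) is a worthwhile refinement that the paper leaves implicit.
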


On the other hand, if $h:Y \rightarrow Y$ is an automorphism which maps each tree to itself, then $f$ is described by two automorphisms $g:T \rightarrow T$ and $h:S \rightarrow S$ satisfying $\phi \circ g = h \circ \phi$ when restricted to the leaves, or $g = \phi^{-1} h \phi$ as elements of the symmetric group.
\begin{proposition}
\pushQED{\qed}
Assume an ordered tanglegram $Y = (T, \phi, S)$, or an unordered tanglegram $(\{T,S\}, \phi)$.
Set $H = \autL(T) \cap \phi^{-1} \autL(T) \phi$.
\begin{enumerate}[1.]
\item
If $Y$ is ordered or $T$ is not isomorphic to $S$, then $\autL(Y) = H$.
\item
If $Y$ is unordered and $T$ is isomorphic to $S$, then,
\begin{enumerate}[a.]
\item
if $\autL(T) \cap \phi \autL(T) \phi \neq \emptyset$, then
$\autL(Y)$ contains $H$ as a subgroup of index 2.
\item
otherwise, $\autL(Y) = H$.
\end{enumerate}
\end{enumerate}
\popQED
\end{proposition}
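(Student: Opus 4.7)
The plan is to decompose $\autL(Y)$ according to how each automorphism behaves on the two trees. Any $f \in \autL(Y)$ either maps each tree to itself or exchanges the two; call the first kind \emph{tree-fixing} and let $N \subseteq \autL(Y)$ denote the set of tree-fixing automorphisms. For an unordered tanglegram, the assignment of $f$ to the induced permutation of $\{T,S\}$ is a homomorphism $\autL(Y) \to \fS_2$ with kernel $N$, so $[\autL(Y):N]$ is $1$ or $2$, and equals $2$ precisely when some automorphism of $Y$ exchanges the trees. For an ordered tanglegram there are no tree-exchanging automorphisms by definition, so $\autL(Y)=N$ automatically.

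Next I would identify $N$ with $H$. By the calculation just before the statement, an element of $N$ is encoded by a pair $(g,h) \in \autL(T) \times \autL(S)$ satisfying $h\circ\phi = \phi\circ g$, equivalently $h = \phi\, g\, \phi^{-1}$ in $\fS_n$. By Lemma~\ref{lemma:leafAction} the underlying graph automorphism is determined by its leaf action, so $g$ determines $h$ (and therefore $f$); the compatibility constraint $\phi g \phi^{-1} \in \autL(S)$ then cuts out exactly $\autL(T) \cap \phi^{-1}\autL(S)\phi$. When $T$ is isomorphic to $S$ the common leaf marking used throughout the double-coset discussion identifies $\autL(S)$ with $\autL(T)$ as subgroups of $\fS_n$, recovering $H = \autL(T) \cap \phi^{-1}\autL(T)\phi$; when $T\not\cong S$ one reads $H$ in the obvious analogous way as $\autL(T) \cap \phi^{-1}\autL(S)\phi$. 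Either way, $N$ is canonically isomorphic to $H$.

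To finish, I would combine these two observations in the three cases. In case $1$, either $Y$ is ordered (so tree-exchanging automorphisms are forbidden) or $T\not\cong S$ (so any tree-exchanging automorphism would induce a tree isomorphism, a contradiction); in both sub-cases $\autL(Y)=N\cong H$. In case $2$, $Y$ is unordered and $T\cong S$, so Proposition~\ref{prop:flip} applies: a tree-exchanging automorphism exists iff $\phi\,\autL(T)\,\phi \cap \autL(T) \neq \emptyset$. In sub-case 2a this intersection is non-empty, so $[\autL(Y):N]=2$, giving $\autL(Y)\supset H$ with index $2$; in sub-case 2b it is empty, so $\autL(Y)=N\cong H$.

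The main obstacle I expect is bookkeeping rather than content: the apparent asymmetry in the statement of $H$ (which only mentions $\autL(T)$) needs to be reconciled with the two-tree description of $N$, and this is only clean after one fixes the common leaf marking used in Section~2.3 and invokes Lemma~\ref{lemma:leafAction} to collapse $(g,h)$ to $g$. Once that identification is in place, everything else is an immediate consequence of Proposition~\ref{prop:flip} and the kernel-of-the-$\fS_2$-homomorphism argument.
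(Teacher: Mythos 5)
Your proposal is correct and follows essentially the same route as the paper, which treats this proposition as an immediate consequence of the preceding discussion (the identification of tree-preserving automorphisms with pairs $(g,h)$ satisfying $h\circ\phi=\phi\circ g$, hence with $H$) together with Proposition~\ref{prop:flip} for the tree-exchanging case. Your explicit kernel-of-the-$\fS_2$-homomorphism argument just makes rigorous the index-$2$ claim that the paper leaves implicit, and your remark about reconciling $\autL(S)$ with $\autL(T)$ via the common leaf marking correctly identifies the only bookkeeping subtlety.
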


\FIGnoFlip

Similar to the case for trees, tanglegram automorphisms are determined entirely by their action on the leaves of one of the trees.

\subsection{Labeled tanglegrams}
Analogous to the concept of a leaf-labeled tree, there is a concept of a labeled tanglegram.
\begin{definition}
A \emph{labeled tanglegram} is a tanglegram along with a bijective map of the label set $X$ to the leaves of one of the trees.
\end{definition}
This is analogous to the definition of a leaf-labeled phylogenetic tree \cite{Semple2003-em}.
The other tree can be considered to be labeled by the composition of the labeling with the bijection.
Applying this labeling to both trees and then forgetting the bijection gives a pair of leaf labeled trees on the same label set, and each such pair of leaf labeled trees obviously determines a labeled tanglegram.
Thus, labeled tanglegrams are in one-to-one correspondence with pairs of leaf-labeled phylogenetic trees.
If the tanglegram is ordered, then this is an ordered pair of trees, and if unordered it is unordered.

It is natural to ask how many distinct labeled $n$-tanglegrams have the same underlying ordered or unordered tanglegram.
Each leaf has a distinct label, such that the symmetric group acts freely on these labels.
By the orbit-stabilizer theorem,
\begin{proposition}
\label{prop:nlabelings}
\pushQED{\qed}
The number of leaf-distinct labelings of a given $n$-tanglegram $Y$ is equal to $n! / |\autL(Y)|$.
\popQED
\end{proposition}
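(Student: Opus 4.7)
The plan is to apply the orbit--stabilizer theorem to a natural action of $\autL(Y)$ on the set of all labelings of $Y$. Fix one of the two trees, say $T$, so that a labeling of $Y$ amounts to a bijection $\ell : X \to L(T)$; since $|X| = |L(T)| = n$, there are exactly $n!$ such labelings. Two labelings $\ell$ and $\ell'$ give the same labeled tanglegram precisely when some $\alpha \in \autL(Y)$ carries $\ell$ to $\ell'$ via its induced action on $L(T)$, so counting distinct labeled tanglegrams with underlying tanglegram $Y$ reduces to counting orbits of $\autL(Y)$ on the $n!$ labelings.

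The main step -- and really the only nontrivial one -- is to verify that this action of $\autL(Y)$ on labelings is free. This is exactly the content of the sentence preceding the statement, namely that a tanglegram automorphism is determined by its action on the leaves of a single tree. Concretely, if $\alpha \cdot \ell = \ell$ for some labeling $\ell$, then the induced action of $\alpha$ on $L(T)$ fixes every element, which forces $\alpha$ to be the identity. For ordered tanglegrams, or for unordered tanglegrams whose automorphism preserves each tree, this is Lemma~\ref{lemma:leafAction} applied to $T$ and $S$ separately. For unordered tanglegrams whose automorphism swaps the two trees, one writes the swap as a pair of tree isomorphisms $g_1 : T \to S$ and $g_2 : S \to T$ related via $\phi$ as in the discussion above Proposition~\ref{prop:flip}, so that both $g_1$ and $g_2$ are determined by the induced leaf bijections; fixing every element of $L(T)$ again forces triviality.

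With freeness of the $\autL(Y)$-action established, every orbit has size $|\autL(Y)|$, and orbit--stabilizer immediately yields that the number of orbits equals $n!/|\autL(Y)|$, as claimed. I do not anticipate any genuine obstacle beyond being careful, in the unordered case, to interpret ``action on $L(T)$'' correctly for tree-swapping automorphisms when invoking Lemma~\ref{lemma:leafAction}.
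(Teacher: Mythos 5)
Your overall strategy --- orbit--stabilizer applied to the action of $\autL(Y)$ on the $n!$ labelings, with freeness supplied by the fact that tanglegram automorphisms are determined by their leaf action --- is the same argument the paper uses; its entire proof is the sentence preceding the statement asserting freeness and then invoking orbit--stabilizer. For ordered tanglegrams, and for unordered tanglegrams all of whose automorphisms preserve the two trees, your argument is complete: there Lemma~\ref{lemma:leafAction} really does force the stabilizer of a labeling to be trivial.

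The gap is in exactly the case you flagged and then dismissed: tree-swapping automorphisms of an unordered tanglegram. Such an automorphism $\alpha$ does not induce a permutation of $L(T)$ at all --- it carries $L(T)$ into $L(S)$ --- so ``fixing every element of $L(T)$'' is not the relevant condition. Writing $\alpha$ as a pair $g_1 : T \to S$, $g_2 : S \to T$ with $g_1 = \phi \circ g_2 \circ \phi$, one checks that $\alpha$ sends the labeling $\ell : X \to L(T)$ to $g_2 \circ \phi \circ \ell$, so $\alpha$ stabilizes $\ell$ precisely when $g_2 \circ \phi = \mathrm{id}$ on $L(T)$, i.e.\ when $\phi^{-1}$ extends to a tree isomorphism $S \to T$. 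This can happen: take $S = T$ and $\phi$ the identity, so that the tanglegram is two copies of one tree glued along an isomorphism. The resulting ``straight swap'' is a nontrivial element of $\autL(Y)$ fixing \emph{every} labeling, the action is not free, and the stated formula actually fails --- for $n = 2$ the unique unordered tanglegram has $|\autL(Y)| = 4$ but a single labeled version, and $2!/4$ is not even an integer; likewise two identically-labeled copies of the four-leaf caterpillar give $12$ unordered labeled pairs, not $24/4 = 6$. So your freeness claim (and with it the proposition) needs the extra hypothesis that no swapping automorphism restricts to $\phi^{-1}$ on $L(S)$. To be fair, the paper's own one-line proof glosses over precisely the same point, so the defect is inherited from the statement rather than introduced by your argument; but since you singled out this case as the one requiring care, it is worth recording that the care required is to exclude it, not to wave it through.
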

This is true for ordered and unordered tanglegrams, using their respective automorphism definitions.
For example, there are 12 labelings for the ordered tanglegram \newick{(1,(2,(3,4))); (((1,2),3),4);} but only 6 when considered as an unordered tanglegram.

Given a means of sampling uniformly from tanglegrams \cite{enumerateBinaryAsymmetric}, we can use this proposition to obtain a weighted sampling scheme for the uniform distribution across pairs of phylogenetic trees on the same labeling set.
For example, assume we wanted to approximate the expectation of a function $f$ on uniformly sampled pairs of labeled trees, but which is constant on pairs of trees that make the same tanglegram (such as SPR distance).
Then
\[
\sum_{T_1, T_2} f(T_1, T_2) \PP(T_1, T_2) = \sum_{Y} f(T_1, T_2) \PP(T_1, T_2 | Y) \PP(Y)
\]
where if $f(T_1, T_2) = f(T_2, T_1)$ for all $T_1, T_2$ then the right hand sum can be over unordered tanglegrams $Y$, and otherwise it is over ordered tanglegrams $Y$.
Here $\PP(T_1, T_2 | Y)$ is simply the indicator function expressing if $T_1$ and $T_2$ make $Y$, divided by the number of pairs of labeled trees making $Y$ as enumerated in Proposition~\ref{prop:nlabelings}.
Rather than sampling pairs of trees uniformly and calculating an empirical expectation as on the left side, we can get a lower variance estimator by sampling tanglegrams uniformly and weighting them as on the right hand side.
Such a means of sampling uniformly from tanglegrams in the rooted binary ordered case is given in \cite{enumerateBinaryAsymmetric}.

\section{Variants and special cases}

\subsection{Multiple trees}
The definition of a tanglegram on two trees can be generalized to a version on multiple trees.
\begin{definition}
Given trees $T_1, \ldots, T_n$ with the same number of leaves, a \emph{multi-tanglegram} on this set of trees is given by a pair of tuples $((T_1, \ldots, T_n), (\phi_{ij})_{i,j \in 1, \ldots, n})$ in which $\phi_{ij}:L(T_i) \rightarrow L(T_j)$ are bijections satisfying:
\begin{enumerate}[1.]
\item $\phi_{ii} = 1$ for all i;
\item $\phi_{ji} = \phi_{ij}^{-1}$ for all i, j;
\item $\phi_{ik} = \phi_{jk} \circ \phi_{ij}$, for all i, j, k.
\end{enumerate}
\end{definition}

We can also generalize the definition of isomorphism to multi-tanglegrams on $n$ trees.
\begin{definition}
Two multi-tanglegrams $Y = ((T_1, \ldots, T_n), (\phi_{ij})_{i,j \in 1, \ldots, n})$ and
$Y' = ((T_{1}, \ldots, T_{n}), (\phi'_{ij})_{i,j \in 1, \ldots, n})$ on the same list of trees
are isomorphic if there exist automorphisms
$(g_i:T_i \rightarrow T_i)_{i \in 1, \ldots, n}$ and
$(h_i:T_i \rightarrow T_i)_{i \in 1, \ldots, n}$ satisfying
$h_j \circ \phi_{ij} = \phi'_{ij} \circ g_i$ for $i,j = 1, \ldots, n$.
\end{definition}

It is clear that the $n^2$ bijections $\phi_{ij}$ are completely determined
by the $n - 1$ bijections $\{\phi_{1i}\}_{i = 2, \ldots, n}$, since
$\phi_{ij} = \phi_{1j} \circ \phi_{1i}^{-1}$.
With this observation, we can rephrase the definition of isomorphism above, which we will state as a proposition:
\begin{proposition}
Using the notation above, multi-tanglegrams $Y_1$ and $Y_2$ are isomorphic if and only if there exist automorphisms $g_i \in \autL(T_i), i = 1, \ldots, n$ satisfying
$\phi_{1i}' = g_i \circ \phi_{1i} \circ g_{1}^{-1}$.
\end{proposition}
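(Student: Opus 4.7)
The plan is to prove both directions of the biconditional by exploiting two elementary observations: first, that setting $i = j$ in the defining condition of isomorphism forces the two families of automorphisms to coincide, and second, that the cocycle law $\phi_{ij} = \phi_{1j} \circ \phi_{1i}^{-1}$ (and its analogue for $\phi'$) reduces every relation to a statement involving only the $\phi_{1i}$.

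For the forward direction, suppose $Y_1$ and $Y_2$ are isomorphic, with witnessing automorphisms $(g_i)$ and $(h_i)$ satisfying $h_j \circ \phi_{ij} = \phi'_{ij} \circ g_i$ for all $i, j$. Setting $j = i$ and using $\phi_{ii} = \phi'_{ii} = 1$ gives $h_i = g_i$ for each $i$. Then specializing to $i = 1$ yields $g_j \circ \phi_{1j} = \phi'_{1j} \circ g_1$, which is the required identity $\phi'_{1j} = g_j \circ \phi_{1j} \circ g_1^{-1}$.

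For the converse, assume we are given $g_i \in \autL(T_i)$ with $\phi'_{1i} = g_i \circ \phi_{1i} \circ g_1^{-1}$ for all $i$, and set $h_i := g_i$. I need to verify $h_j \circ \phi_{ij} = \phi'_{ij} \circ g_i$ for all pairs $i, j$. Using the cocycle relation $\phi'_{ij} = \phi'_{1j} \circ (\phi'_{1i})^{-1}$, the hypothesis expands to
\[
\phi'_{ij} = \bigl(g_j \circ \phi_{1j} \circ g_1^{-1}\bigr) \circ \bigl(g_i \circ \phi_{1i} \circ g_1^{-1}\bigr)^{-1} = g_j \circ \phi_{1j} \circ \phi_{1i}^{-1} \circ g_i^{-1} = g_j \circ \phi_{ij} \circ g_i^{-1},
\]
which rearranges to the desired $g_j \circ \phi_{ij} = \phi'_{ij} \circ g_i$.

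There is no serious obstacle here; the proof is essentially bookkeeping, and the only nontrivial ingredient is the observation that the $h_i$ are redundant. The main point worth making in the exposition is precisely that the redundancy follows from the reflexive condition $\phi_{ii} = 1$, after which everything is forced by the cocycle law on bijections.
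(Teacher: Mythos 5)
Your proof is correct and follows exactly the route the paper intends: the paper states this proposition without proof, justifying it only by the preceding observation that $\phi_{ij} = \phi_{1j} \circ \phi_{1i}^{-1}$, and your argument is precisely the bookkeeping that observation suggests. The one detail you make explicit that the paper leaves silent---that $\phi_{ii} = 1$ forces $h_i = g_i$, collapsing the two families of automorphisms into one---is a worthwhile addition, since it explains why the proposition can be stated with a single family $(g_i)$.
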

Alternatively, the automorphisms $\phi_{ij}$ are completely determined by a sequence $\phi_{12}, \phi_{23}, \ldots, \phi_{k-1\, k}$, and thus multi-tanglegrams are called \emph{tangled chains} by \cite{enumerateBinaryAsymmetric}.

\subsection{More general classes of graphs}
Another direction of generalization involves considering more general classes of graphs.
For example, the tanglegram layout problem has been studied for rooted phylogenetic networks \cite{Scornavacca2011-xa}.
Given a natural number $n$, define an \emph{$n$-leaved graph} as a graph $U$ along with $n$ distinguished vertices $L(U)$ called \emph{leaves}.
\begin{definition}
Given a natural number $n$, define a \emph{generalized $n$-tanglegram} as a triple $(U, \phi, V)$, where $U$ and $V$ are a pair of $n$-leaved graphs and $\phi$ is a bijection between $L(U)$ and $L(V)$.
\end{definition}

Equivalent statements to those above can also hold in this more general setting.
If we require that $n$-leaved graph automorphisms preserve the leaf set $L(U)$, we can again define the leaf automorphism group $\autL(U)$ to be the automorphism group of $U$ restricted to $L(U)$.
If the graphs are such that any graph automorphism is determined by its action on the leaf set, then generalized tanglegrams on a given pair of $n$-leaved graphs $U$ and $V$ are in one-to-one correspondence with double cosets $\autL(V) w \autL(U)$ in $\fS_n$.

\subsection{Partitions}
Another line of inquiry in computational evolutionary biology concerns species delimitation, which can naturally be phrased in terms of inference of a partition of labeled objects.
In a manner analogous to phylogenetic trees, researchers use MCMC to explore the posterior on such partitions \cite{Yang2010-kc}, and comparison of the results can be performed using distances between the partitions \cite{Gusfield2002-il}.
Similar considerations hold for random walks and these distances as described in the introduction for trees.
These partitions can also be thought of as a certain type of leaf-labeled tree of height two, thus pairs of partitions on the same underlying set also give a type of tanglegram.

All of the above conclusions hold for such partition tanglegrams as well.
The automorphisms of a partition are a special case of Theorem~\ref{theorem:treeAuto}.
For example, the partition $123 \mid 456 \mid 78$ has automorphism group $(\fS_3 \wr \fS_2) \times \fS_2$.

\section{Enumeration}
Using a computer algebra package such as GAP4 \cite{GAP4} which is able to enumerate double cosets, and a package such as Sage \cite{SteinJoyner2005} which can obtain symmetry groups of graphs, one can apply Proposition~\ref{prop:cosetTanglegramEquivalence} to directly enumerate any type of tanglegram on a given pair of trees.
We have provided code to enumerate and work with tanglegrams at \url{https://github.com/matsengrp/tangle}.

For the case of binary ordered rooted tanglegrams, an elegant formula for the total number of tanglegrams on $n$ leaves $t_n$ has recently been found \cite{enumerateBinaryAsymmetric}.
One can use this formula, along with the number of tanglegrams on pairs of isomorphic trees, to compute the number of unordered tanglegrams as follows.

An unordered tanglegram is represented twice in the list of ordered tanglegrams on $n$ leaves if the two trees are non-isomorphic, or if the trees are isomorphic and the coset is different when the representative is inverted as in Figure~\ref{FIGnoFlip}.
For $n$ leaves, we let $s_n$ be the number of unordered tanglegrams, and then let $t_n^{\iso}$ be the number of ordered tanglegrams and $s_n^{\iso}$ the number of unordered tanglegrams on isomorphic pairs of trees.
To get $s_n$, we start with $t_n$ and subtract off half the number of ordered tanglegrams on non-isomorphic trees for the first case, and then subtract off $t_n^{\iso} - s_n^{\iso}$ for the second.
Simplifying $t_n - (t_n - t_n^{\iso})/2 - (t_n^{\iso} - s_n^{\iso})$, we get $s_n = \left(t_n -t_n^{\iso}\right)/2 + s_n^{\iso}$ for any $n \geq 3$.
\FIGcounts
\TABcounts

Such direct enumeration of various types of tanglegrams (Figure~\ref{FIGcounts}, Table~\ref{TABcounts}) suggests that their number grows super-exponentially.
In fact, that the number of (binary ordered rooted) tanglegrams is $O(n!\,4^n\,n^{-3})$ as shown by \cite{enumerateBinaryAsymmetric}.

There are thus many fewer such tanglegrams than there are pairs of leaf-labeled trees.
Indeed, a simplification of the argument establishing Corollary~8 of \cite{enumerateBinaryAsymmetric} shows that the ratio of the number of ordered pairs of leaf-labeled rooted trees to the number of binary ordered rooted tanglegrams is asymptotically a constant times the order of the symmetric group:
\[
\frac{\left((2n-3)!!\right)^2}{t_n} \sim \frac{n!}{e^{1/8}}.
\]
Intuitively, although the action of the symmetric group is not always free, ``for most cases it is close'' to free.
This may suggest that for $n$ leaves, the ratio of the number of ordered pairs of leaf-labeled unrooted trees to the number of binary ordered unrooted tanglegrams is also of order $n!$.

\section{Discussion}
Tanglegrams have been an object of study since before DNA sequences were widely available for the reconstruction of phylogenetic trees \cite{Hafner1988-da}.
So far they have been studied before in the context of co-evolutionary analyses, classically that between a host and a parasite, a subject of continuing interest \cite{Libeskind-Hadas2009-tx,Drinkwater2014-yr}.
As such, there has been extensive work on the case in which two rooted trees are distinguished between one another, as when one tree represents hosts and one parasites, which we call the ordered rooted case.
Here we have broadened the definition of tanglegrams by considering a broader class of underlying graphs, including unordered and/or unrooted tanglegrams.

In this form, tanglegrams formalize statements concerning pairs of phylogenetic trees on the same leaf set that do not directly make reference to the labels themselves.
Symmetric tanglegrams also do not make reference to the order of the trees.
We observe that many problems in phylogenetic combinatorics ``factor'' through a problem on tanglegrams.
As such, we believe tanglegrams to be a worthwhile object of study in phylogenetic combinatorics, and note that they have already been crucial in an analysis of the geometry of the subtree-prune-regraft graph \cite{Whidden2015-lg}.

These generalized notions of tanglegrams, which are equivalent to the collection of double cosets formed by the automorphism groups of the two trees, invite further investigation by combinatorialists.
An elegant formula for the number of binary ordered rooted tanglegrams has recently been found \cite{enumerateBinaryAsymmetric}, as well as for the multi-tanglegram case.
Here we provide the first several terms of the analogous sequence for unordered and/or unrooted tanglegrams; Ira Gessel has used the theory of species to develop means to enumerating unordered tanglegrams, which will be described in a forthcoming paper \cite{gessel.2015}.
It would be helpful to have a means of efficiently sampling other classes of tanglegrams according to familiar distributions on labeled phylogenetic trees, perhaps building on the method of sampling binary ordered rooted tanglegrams uniformly at random in \cite{enumerateBinaryAsymmetric}.

\section{Acknowledgements}
We would like to thank
Steve Evans,
Ira Gessel,
Michael Landis,
Chris Whidden,
and Bianca Viray.
We also thank the authors of the Sage and GAP4 software, especially Alexander Hulpke.

\bibliographystyle{ieeetr}
\bibliography{tangletex}

\begin{thebibliography}{10}

\bibitem{Beiko2005-la}
R.~G. Beiko, T.~J. Harlow, and M.~A. Ragan, ``Highways of gene sharing in
  prokaryotes,'' {\em Proc. Natl. Acad. Sci. U. S. A.}, vol.~102,
  pp.~14332--14337, 4~Oct. 2005.

\bibitem{Whidden2014-yt}
C.~Whidden, N.~Zeh, and R.~G. Beiko, ``Supertrees based on the subtree
  prune-and-regraft distance,'' {\em Syst. Biol.}, 2~Apr. 2014.

\bibitem{Allen2001-yz}
B.~L. Allen and M.~Steel, ``Subtree transfer operations and their induced
  metrics on evolutionary trees,'' {\em Ann. Comb.}, vol.~5, pp.~1--15, 1~June
  2001.

\bibitem{Bordewich2005-cx}
M.~Bordewich and C.~Semple, ``On the computational complexity of the rooted
  subtree prune and regraft distance,'' {\em Ann. Comb.}, vol.~8, pp.~409--423,
  1~Jan. 2005.

\bibitem{Whidden2013-tf}
C.~Whidden, R.~Beiko, and N.~Zeh, ``{Fixed-Parameter} algorithms for maximum
  agreement forests,'' {\em SIAM J. Comput.}, vol.~42, no.~4, pp.~1431--1466,
  2013.

\bibitem{Aldous2000-vg}
D.~J. Aldous, ``Mixing time for a {M}arkov chain on cladograms,'' {\em Comb.
  Probab. Comput.}, vol.~9, pp.~191--204, 1~May 2000.

\bibitem{Diaconis2002-gy}
P.~Diaconis and S.~Holmes, ``Random walks on trees and matchings,'' {\em
  Electronic Journal of Probability}, vol.~7, no.~6, pp.~1--17, 2002.

\bibitem{Evans2006-xh}
S.~N. Evans and A.~Winter, ``Subtree prune and regraft: A reversible real
  tree-valued {M}arkov process,'' {\em Ann. Probab.}, vol.~34, pp.~918--961,
  May 2006.

\bibitem{Alfaro2006-ru}
M.~E. Alfaro and M.~T. Holder, ``The posterior and the prior in {B}ayesian
  phylogenetics,'' {\em Annu. Rev. Ecol. Evol. Syst.}, vol.~37, pp.~19--42,
  Jan. 2006.

\bibitem{Ollivier2009-bw}
Y.~Ollivier, ``Ricci curvature of {M}arkov chains on metric spaces,'' {\em J.
  Funct. Anal.}, vol.~256, pp.~810--864, 1~Feb. 2009.

\bibitem{Whidden2015-lg}
C.~Whidden and F.~A. Matsen, IV, ``{Ricci-Ollivier} curvature of the rooted
  phylogenetic {Subtree-Prune-Regraft} graph,'' 1~Apr. 2015.

\bibitem{Gusfield2002-il}
D.~Gusfield, ``Partition-distance: A problem and class of perfect graphs
  arising in clustering,'' {\em Inf. Process. Lett.}, vol.~82, pp.~159--164,
  16~May 2002.

\bibitem{Yang2010-kc}
Z.~Yang and B.~Rannala, ``Bayesian species delimitation using multilocus
  sequence data,'' {\em Proc. Natl. Acad. Sci. U. S. A.}, vol.~107,
  pp.~9264--9269, 18~May 2010.

\bibitem{Page1993-hr}
R.~D.~M. Page, ``Parasites, phylogeny and cospeciation,'' {\em Int. J.
  Parasitol.}, vol.~23, pp.~499--506, July 1993.

\bibitem{Page2003-rr}
R.~D.~M. Page, {\em Tangled trees: phylogeny, cospeciation, and coevolution}.
\newblock University of Chicago Press, 2003.

\bibitem{Venkatachalam2010-zh}
B.~Venkatachalam, J.~Apple, K.~St~John, and D.~Gusfield, ``Untangling
  tanglegrams: comparing trees by their drawings,'' {\em IEEE/ACM Trans.
  Comput. Biol. Bioinform.}, vol.~7, pp.~588--597, Oct. 2010.

\bibitem{Buchin2008-lc}
K.~Buchin, M.~Buchin, J.~Byrka, M.~N{\`{o}}llenburg, Y.~Okamoto, R.~I.
  Silveira, and A.~Wolff, ``Drawing (complete) binary tanglegrams: Hardness,
  approximation, {Fixed-Parameter} tractability.'' 5~June 2008.

\bibitem{Lozano2008-tp}
A.~Lozano, R.~Y. Pinter, O.~Rokhlenko, G.~Valiente, and M.~Ziv-Ukelson,
  ``Seeded tree alignment,'' {\em IEEE/ACM Trans. Comput. Biol. Bioinform.},
  vol.~5, pp.~503--513, Oct. 2008.

\bibitem{Bansal2009-ni}
M.~S. Bansal, W.-C. Chang, O.~Eulenstein, and D.~Fern\'{a}ndez-Baca,
  ``Generalized binary tanglegrams: Algorithms and applications,'' in {\em
  Bioinformatics and Computational Biology}, Lecture Notes in Computer Science,
  pp.~114--125, Springer Berlin Heidelberg, 1~Jan. 2009.

\bibitem{Bocker2009-xl}
S.~B{\"{o}}cker, F.~H{\"{u}}ffner, A.~Truss, and M.~Wahlstr{\"{o}}m, ``A faster
  {Fixed-Parameter} approach to drawing binary tanglegrams,'' in {\em
  Parameterized and Exact Computation}, Lecture Notes in Computer Science,
  pp.~38--49, Springer Berlin Heidelberg, 1~Jan. 2009.

\bibitem{Fernau2010-an}
H.~Fernau, M.~Kaufmann, and M.~Poths, ``Comparing trees via crossing
  minimization,'' {\em J. Comput. System Sci.}, vol.~76, pp.~593--608, Nov.
  2010.

\bibitem{enumerateBinaryAsymmetric}
S.~Billey, M.~Konvalinka, and F.~Matsen~IV, ``The number of rooted binary
  tanglegrams,'' {\em arXiv; Submitted to Transactions in Mathematics}, 2015.

\bibitem{Bininda-Emonds2002-bt}
O.~R.~P. Bininda-Emonds, J.~L. Gittleman, and M.~A. Steel, ``The {(Super)Tree}
  of life: Procedures, problems, and prospects,'' {\em Annu. Rev. Ecol. Syst.},
  vol.~33, pp.~265--289, 1~Jan. 2002.

\bibitem{Steel2008-pn}
M.~Steel and A.~Rodrigo, ``Maximum likelihood supertrees,'' {\em Syst. Biol.},
  vol.~57, pp.~243--250, Apr. 2008.

\bibitem{Baroni2005-kr}
M.~Baroni, C.~Semple, and M.~Steel, ``A framework for representing reticulate
  evolution,'' {\em Ann. Comb.}, vol.~8, no.~4, pp.~391--408, 2005.

\bibitem{Finden1985-bv}
C.~R. Finden and A.~D. Gordon, ``Obtaining common pruned trees,'' {\em J.
  Classification}, vol.~2, no.~1, pp.~255--276, 1985.

\bibitem{Farach1995-wm}
M.~Farach, T.~M. Przytycka, and M.~Thorup, ``On the agreement of many trees,''
  {\em Inf. Process. Lett.}, vol.~55, pp.~297--301, 29~Sept. 1995.

\bibitem{Dummit2004-ls}
D.~S. Dummit and R.~M. Foote, {\em Abstract Algebra}.
\newblock Hoboken, NJ: John Wiley and Sons, 2004.

\bibitem{Jordan1869-ce}
C.~Jordan, ``Sur les assemblages de lignes,'' {\em J. Reine Angew. Math.},
  vol.~70, no.~185, p.~81, 1869.

\bibitem{Polya1937-np}
G.~P\'{o}lya, ``{Kombinatorische Anzahlbestimmungen f{\"{u}}r Gruppen, Graphen
  und chemische Verbindungen},'' {\em Acta Math.}, vol.~68, no.~1,
  pp.~145--254, 1937.

\bibitem{wiki:newick}
Wikipedia, ``Newick format --- {W}ikipedia{,} the free encyclopedia,'' 2014.
\newblock [Online; accessed 04-October-2014].

\bibitem{Knuth1973}
D.~E. Knuth, {\em The Art of Computer Programming, Volume 1 (2rd Ed.):
  Fundamental Algorithms}.
\newblock Redwood City, CA, USA: Addison Wesley Longman Publishing Co., Inc.,
  1973.

\bibitem{Lang-Algebra}
S.~Lang, {\em Algebra}.
\newblock Menlo Park Cal: Addison-Wesley, 1993.

\bibitem{Semple2003-em}
C.~Semple and M.~Steel, {\em Phylogenetics}.
\newblock New York, NY: Oxford University Press, 2003.

\bibitem{Scornavacca2011-xa}
C.~Scornavacca, F.~Zickmann, and D.~H. Huson, ``Tanglegrams for rooted
  phylogenetic trees and networks,'' {\em Bioinformatics}, vol.~27,
  pp.~i248--56, 1~July 2011.

\bibitem{GAP4}
The GAP~Group, {\em {GAP -- Groups, Algorithms, and Programming, Version
  4.7.5}}, 2014.
\newblock \url{http://www.gap-system.org}.

\bibitem{SteinJoyner2005}
W.~Stein and D.~Joyner, ``{SAGE}: System for algebra and geometry
  experimentation,'' {\em ACM SIGSAM Bulletin}, vol.~39, no.~2, pp.~61--64,
  2005.
\newblock \url{http://sagemath.org/}.

\bibitem{Hafner1988-da}
M.~S. Hafner and S.~A. Nadler, ``Phylogenetic trees support the coevolution of
  parasites and their hosts,'' {\em Nature}, vol.~332, pp.~258--259, 17~Mar.
  1988.

\bibitem{Libeskind-Hadas2009-tx}
R.~Libeskind-Hadas and M.~A. Charleston, ``On the computational complexity of
  the reticulate cophylogeny reconstruction problem,'' {\em J. Comput. Biol.},
  vol.~16, pp.~105--117, Jan. 2009.

\bibitem{Drinkwater2014-yr}
B.~Drinkwater and M.~A. Charleston, ``Introducing {TreeCollapse}: a novel
  greedy algorithm to solve the cophylogeny reconstruction problem,'' {\em BMC
  Bioinformatics}, vol.~15 Suppl 16, p.~S14, 8~Dec. 2014.

\bibitem{gessel.2015}
I.~Gessel. personal communication.

\end{thebibliography}

\clearpage

\section*{Appendix}

\begin{proof}[Proof of Lemma~\ref{lemma:leafAction}]
Assume that two isomorphisms $f, g:T \rightarrow S$ induce the same bijection of $L(T)$ to $L(S)$.
Let $\Pi$ be a path between any two leaves of $T$.
Then $f(\Pi)$ and $g(\Pi)$ are paths are paths between the same leaves of $S$ and thus are identical by definition of a tree.
Now, we just need to prove that every internal vertex $x$ lies on a path joining two leaves.
Since $x$ is internal, it belongs to at least two edges $(x, y)$ and $(x, y')$.
Consider a sequence of vertices obtained by following edges in the graph without backtracking, i.e.\ such that $(w,z)$ never follows $(z,w)$, starting with $(x,y)$.
Because the tree is finite and contains no loops by definition, this path will terminate at a leaf.
The same argument applied to $(x,y')$ finds another leaf such that the path between these two leaves contains $x$.
\end{proof}

\end{document}